\newtheorem{theorem}{Theorem}
\DeclareMathOperator{\real}{Re}
\DeclareMathOperator{\rect}{rect}
\DeclareMathOperator{\sinc}{sinc}
\title{Understanding the physics of coherent LiDAR}
\author{Alexander Y. Piggott}
\date{\today}
\begin{document}
\maketitle


\section{Introduction}
Coherent LiDAR (Light Detecting And Ranging) is a promising 3D imaging technology that provides significant advantages over more traditional LiDAR systems. In addition to being immune to ambient light \cite{bbehroozpour_ieeecm2017}, coherent LiDAR directly measures the velocity of moving objects by sensing Doppler shift of light \cite{bbehroozpour_ieeecm2017, hdgriffiths_ecej1990}, and can achieve exceptional depth accuracies. Due to these attributes, it is becoming popular for autonomous vehicle applications. Coherent LiDAR is also the ranging technique of choice for a variety of groups attempting to develop fully solid-state LiDAR systems \cite{cvpoulton_ol2017, samiller_cleo2018, cvpoulton_ieeeqe2019, jriemensberger_nature2020}.

The goal of this manuscript is to explain the basic physics of coherent LiDAR with rigorous derivations from first principles. In particular, we strive to answer two key questions about coherent LiDAR:
\begin{enumerate}
\item What is the sensitivity of coherent LiDAR, i.e. how many photons need to be collected to detect a target?
\item What is the collection efficiency of coherent LiDAR, i.e. how many photons are actually returned from a target?
\end{enumerate}

\clearpage
\section{Architectures}

\begin{figure}[b!]
\centering
\includegraphics[scale=0.9]{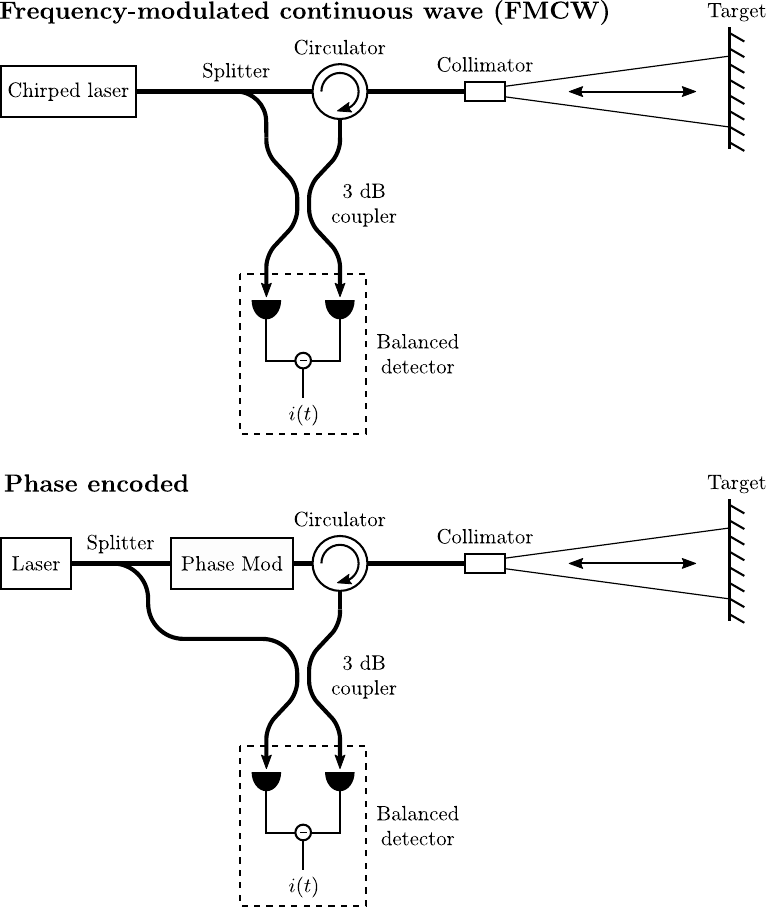}
\caption{The two basic coherent LiDAR architectures. \label{fig:architectures}}
\end{figure}

Amplitude modulated LiDAR systems use \emph{direct detection}, where scattered light from a target is directly detected by a photodiode or other photodetector. Although direct detection is simple to implement, system performance is typically limited by detector noise and ambient light.

In contrast, coherent LiDAR systems use \emph{coherent detection}, where the weak scattered signal is mixed with a strong local oscillator signal on a detector. The signal is significantly amplified by this mixing process, allowing relatively noisy detectors to be used. Due to the use of coherent detection, a coherent LiDAR system can directly measure the velocity of moving objects by sensing the Doppler shift of light. Furthermore, since a coherent detector can only detect light that is close in frequency to the local oscillator light, coherent LiDAR is virtually immune to interferene from ambient light.

As illustrated in figure \ref{fig:architectures}, coherent LiDAR systems typically use one of two schemes: the frequency-modulated continuous-wave (FMCW) scheme, and the phase-encoded scheme. The physical systems are almost identical to each other; the only difference is in the modulation scheme. In an FMCW system, a linearly chirped laser is used as both the transitted signal and local oscillator. Meanwhile, in a phase-encoded system, the transmitted signal is modulated in phase, while the un-modulated laser is used as the local oscillator. For simplicity, the LiDAR systems shown in figure \ref{fig:architectures} use balanced detectors. Although more sophisticated implementations may replace the balanced detector with a full coherent receiver, and introduce additional components to sample multiple poplarizations, this does not have a major influence on the basic physics of operation.

\subsection{Frequency-modulated continuous-wave (FMCW)}
The FMCW scheme uses a linearly chirped laser for both the transmitter and local oscillator \cite{bbehroozpour_ieeecm2017, hdgriffiths_ecej1990}, as illustrated in figure \ref{fig:fmcw_scheme}. In the case of a static target, the received signal is a simply a time-delayed version of the transmitted signal. By mixing the transmitted and received signals on an optical heterodyne receiver, the frequency difference between the transmitted and received signal can be extracted. The frequency difference is proportional to the round-trip travel time, and is therefore a measure of the target range.

If the target is moving, the received signal will have an additional frequency shift $\Delta f_\mathit{Doppler}$ proportional to the velocity $v$ due to the Doppler effect:
\begin{align}
\Delta f_\mathit{Doppler} \approx \frac{2 v}{\lambda},
\end{align}
Here, $\lambda$ is the wavelength of light.

To measure both the range and velocity to a target, triangular modulation is commonly used, where an up-chirp is immediately followed by a down-chirp. The measured frequency differences $\Delta f_\mathit{up}$ and $\Delta f_\mathit{down}$ during the up- and down-chirps respectively can then be used to compute the range $d$,
\begin{align}
d = \frac{c \left(\Delta f_\mathit{up} + \Delta f_\mathit{down}\right)}{4 r}
\end{align}
and the velocity $v$,
\begin{align}
v = \frac{\lambda \left(\Delta f_\mathit{up} - \Delta f_\mathit{down}\right)}{4}.
\end{align}
Here, $c$ is the speed of light, and $r$ is the chirp ramp rate.

\begin{figure}[h!tb]
\centering
\includegraphics[scale=0.8]{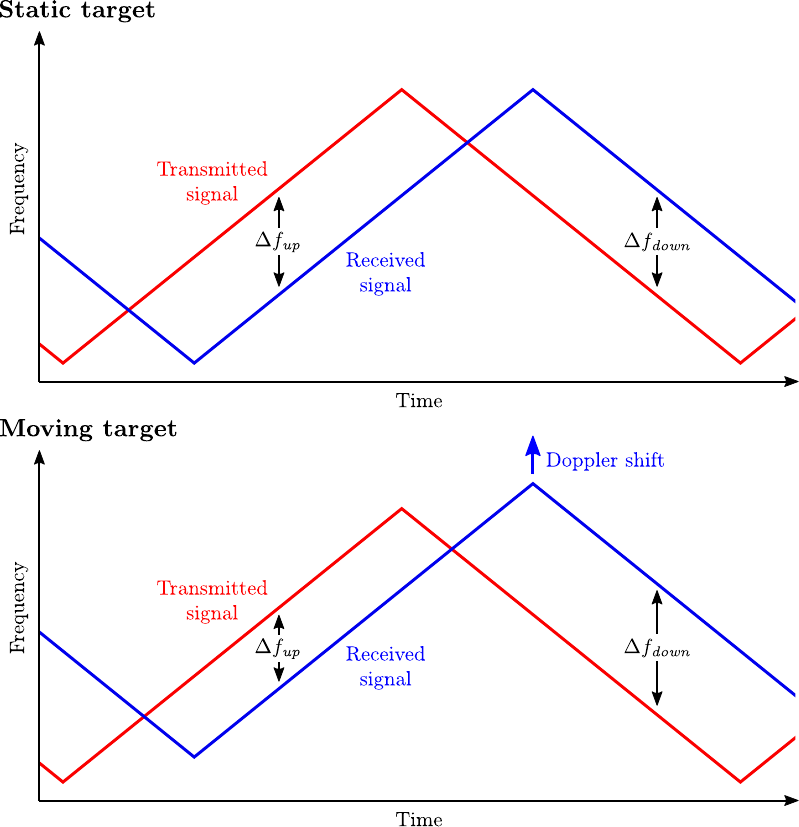}
\caption{Frequency-modulated continuous-wave LiDAR scheme. By following an up-chirp with a down-chirp, it is possible to extract both the range and velocity of a moving object. \label{fig:fmcw_scheme}}
\end{figure}

\subsection{Phase-encoded}
In the phase-encoded scheme, the transmitted beam is phase-modulated with a pseudo-random sequence of bits \cite{blackmore_patent2019_phase}. The received signal is detected on a coherent detector, and then digitally cross-correlated with the transmitted bit-stream to determine the time delay and therefore range to the target.

Although this approach is simpler to implement than the FMCW approach which requires a highly linear frequency chirp, there are a several challenges that must be overcome to create a successful system. First, the Doppler shifts can be very large, reaching up to $50 - 100~\mathrm{MHz}$ for automotive LiDARs, severely interfering with the cross-correlation step. Careful selection of pseudo-random sequences and additional digital signal processing is needed to handle Doppler shifted returns \cite{blackmore_patent2019_phase}. Furthermore, backreflections can elevate the noise floor, and need to be digitally subtracted before taking any cross correlations.

\clearpage
\section{Sensitivity of coherent detectors}
The sensitivity of amplitude-modulated LiDARs is typically limited by system imperfections such as detector dark current, amplifier noise, and ambient light. Although single-photon sensitive systems are permitted by the laws of physics, no commercially available system approaches this level of performance.

In contrast, it is quite straightforward to create a coherent LiDAR system that operates at the fundamental quantum limit of noise. This is achieved when the shot noise of the local oscillator light dominates over all other noise sources in the system, and is typically referred to as the \emph{shot noise limited regime} \cite{mjcollett_jmo1987, marubin_ol2007}. Shot-noise limited detection can be achieved even with relatively noisy photodetectors, as long as the local oscillator light is powerful enough to overcome any detector noise.

A key point to be made, however, is that the fundamental quantum limit for noise is substantially \emph{higher} for coherent LiDAR systems than amplitude-modulated systems. Amplitude-modulated LiDAR is fundamentally limited by thermal noise. At the optical frequencies used by LiDAR systems, thermal noise is essentially zero, and it is possible to robustly measure range by detecting individual photons (for example, with superconducting nanowire detectors \cite{kmnatarajan_sst2012}). In contrast, coherent LiDAR systems are limited by the shot noise of the local oscillator light, and require on the order of a few dozen photons for robust detection.

In this section, we will derive the single-to-noise ratio of coherent LiDAR systems as a function of the number of collected photons. For convenience, we will focus on the FMCW scheme, although phase-encoded schemes will have very similar results.

\subsection{Balanced optical heterodyne receiver}
Consider the ideal balanced optical heterodyne receiver shown in figure \ref{fig:balanced_det}.

\begin{figure}[h!tb]
\centering
\includegraphics[scale=0.9]{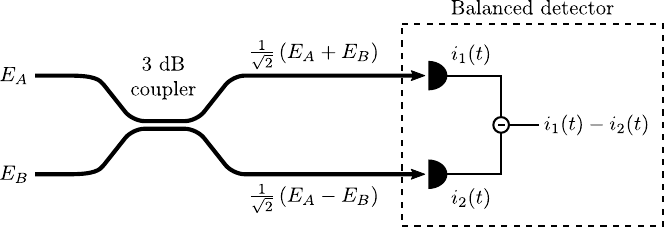}
\caption{Balanced optical heterodyne receiver. \label{fig:balanced_det}}
\end{figure}

The receiver combines light from the two inputs $A$ and $B$ on a 3 dB coupler, which is then detected by two photodiodes. The difference in photocurrent between these two photodiodes is given as the output of the receiver.

We will consider the simple case where the two inputs are single-frequency signals at slightly different frequencies. The input optical fields $E_A$ and $E_B$ would then be
\begin{align}
E_A &= A \cos \omega_A t \\
E_B &= B \cos \omega_B t,
\end{align}
with amplitudes $A$ and $B$ and optical frequencies $\omega_A$ and $\omega_B$ respectively. For convenience, we will also define the time-averaged photocurrents of the two inputs
\begin{align}
i_A 
  &= R\braket{E_A^2} = R A^2 \braket{\cos^2 \omega_A t} = \frac{R A^2}{2} \\
i_B
  &= R\braket{E_B^2} = R B^2 \braket{\cos^2 \omega_B t} = \frac{R B^2}{2},
\end{align}
where $R$ is the responsivity of the photodiodes in the balanced detector, and $\braket{\cdot}$ denotes the average value.

Our goal is to compute the photocurrents generated by the two photodiodes in the balanced detector. The electric field $E_1$ impinging on the first photodiode is given by
\begin{align}
E_1 
= \frac{1}{\sqrt{2}}(E_A + E_B) 
= \frac{1}{\sqrt{2}}(A \cos \omega_A t + B \cos \omega_B t).
\end{align}
The instantaneous power $E_1^2$ is then
\begin{align}
E_1^2 
= \frac{1}{2} \left(
    A^2 \cos^2 \omega_A t
    + B^2 \cos^2 \omega_B t
    + 2 A B \cos \omega_A t \, \cos \omega_B t
   \right).
\end{align}
Making use of the identities
\begin{align}
\cos^2 \theta 
  &= \frac{1}{2} + \frac{1}{2} \cos 2 \theta \\
\cos\theta \, \cos\phi 
  &= \frac{1}{2} \cos(\theta + \phi) 
  + \frac{1}{2} \cos(\theta - \phi),
\end{align}
our expression for the instantaneous power becomes
\begin{align}
E_1^2 = \frac{1}{4}
    \Big(&
        A^2 + B^2
        + A^2 \cos 2 \omega_A t
        + B^2 \cos 2 \omega_B t \\
    &   + 2 A B \cos(\omega_A + \omega_B) t
        + 2 A B \cos(\omega_A - \omega_B) t
    \Big).
\end{align}
The photocurrent $i_1(t)$ is related to the instantaneous power $E_1^2$ by the responsivity $R$, and is given by
\begin{align}
i_1(t) = \frac{R}{4} \left(A^2 + B^2 + 2 A B \cos(\omega_A - \omega_B) \right).
\label{eqn:photocurrent_i1_0}
\end{align}
Here, we have dropped all high-frequency terms since the photodiode can at best respond to frequencies in the gigahertz range. Rewriting this expression in terms of the input photocurrents $i_A$ and $i_B$ yields
\begin{align}
i_1(t) = \frac{1}{2}\left(i_A + i_B\right) 
    + \sqrt{i_A \, i_B} \cos\left(\omega_A - \omega_B\right) t.
\end{align}

Similarly, the field $E_2$ impinging on the second photodiode is
\begin{align}
E_2
= \frac{1}{\sqrt{2}}(E_A - E_B) 
= \frac{1}{\sqrt{2}}(A \cos \omega_A t - B \cos \omega_B t),
\end{align}
and the corresponding photocurrent is
\begin{align}
i_2(t) = \frac{1}{2}\left(i_A + i_B\right) 
    - \sqrt{i_A \, i_B} \cos\left(\omega_A - \omega_B\right) t.
\end{align}

The output current $i(t)$ is the difference between the individual photocurrents $i_1(t)$ and $i_2(t)$, which will therefore be
\begin{align}
i(t) = i_1(t) - i_2(t) = 2 \sqrt{i_A \, i_B} \cos(\omega_A - \omega_B)t.
\label{eqn:baldet_output}
\end{align}
Thus, an optical heterodyne detector directly measures the \emph{frequency difference} $\omega_A - \omega_B$ between the two input beams. The FMCW scheme makes direct use of this to measure the range to targets.

\subsection{Shot-noise limited detection}
In a coherent LiDAR system, an optical heterodyne receiver is used to detect photons scattered from the target. Local oscillator light is fed into input $A$ of the heterodyne receiver, and the scattered light is fed into input $B$. The scattered signal is typically extremely weak, and may consist of as few as a couple dozen photons. The local oscillator photocurrent $R\braket{E_A^2} = i_{LO}$ will thus typically be orders of magnitude stronger than the signal photocurrent $R\braket{E_B^2} = i_{sig}$.

Ideally, the optical heterodyne receiver is operated in the shot-noise limited regime, where local oscillator shot noise dominates over all other noise sources. In this regime, the signal-to-noise ratio depends only upon the number of collected photons, as we will show below.

\subsubsection{FMCW signal}
If we ignore all noise sources, we can see from equation \ref{eqn:baldet_output} that the idealized balanced detector output in an FMCW system will be
\begin{align}
i(t) = 2 \sqrt{i_{LO} \, i_{sig}} \cos(\omega_A - \omega_B)t,
\label{eqn:det_out}
\end{align}
where the difference frequency $\Delta \omega = \omega_A - \omega_B$ is proportional to the range to the target. This FMCW signal can be extremely weak, and is typically detected in the frequency domain by digitizing the detector output and taking a fast Fourier transform (FFT).

Our goal is to determine the signal-to-noise ratio of this signal as a function of the number of collected photoelectrons, which in turn depends upon the signal photocurrent $i_{sig}$ and integration time $T$. We can model the finite integration time by applying a window function $w(t)$ to the detector output $i(t)$ to obtain the windowed detector output $i_w(t)$,
\begin{align}
i_w(t) = i(t) \, w(t).
\end{align}
Next, to compute the signal-to-noise ratio in the frequency domain, we need to calculate the Fourier transform $I_w(f)$ of the windowed detector output. The convolution theorem states that
\begin{align}
I_w(f) = I(f) \ast W(f),
\label{eqn:det_conv_thm}
\end{align}
where $I(f)$ and $W(f)$ are the Fourier transforms of $i(t)$ and $w(t)$ respectively, and $\ast$ is the convolution operator
\begin{align}
(f \ast g) (x) = \int_{-\infty}^{\infty} f(y) \, g(x - y) \, dy.
\end{align}

We first turn our attention to computing the detector photocurrent term $I(f)$, which is given in the time domain by equation \ref{eqn:det_out}. Since the Fourier transform of $\cos \omega t$ is
\begin{align}
\int_{-\infty}^{\infty} \cos \omega t \: e^{-2 \pi j f t} \: dt
    = \frac{1}{2} \left[
        \delta\left( f - \frac{\omega}{2 \pi} \right)
        + \delta\left( f + \frac{\omega}{2 \pi} \right)
    \right],
\end{align}
the Fourier transform of $i(t)$ is given by
\begin{align}
I(f)
    = \sqrt{i_{LO} \, i_{sig}} 
    \Big[    
        \delta\left( f - f_A + f_B \right)
        + \delta\left( f + f_A - f_B \right)
    \Big].
\label{eqn:ft_det_output}
\end{align}
Here, we have defined $f_A = \omega_A / 2 \pi$ and $f_B = \omega_B / 2 \pi$.

Next, we turn our attention to computing the Fourier transform $W(f)$ of the window function $w(t)$. In any practical system, the window function is carefully be chosen to optimize system performance. Due to the presence of unavoidable backreflections within the LiDAR system, window functions that minimize sidelobes such as the Hann window are generally preferred. For the purposes of our current discussion, however, we will assume that we have a rectangular window function as shown in figure \ref{fig:rect_window}, which maximizes the signal-to-noise ratio. 
\begin{figure}[h!tb]
\centering
\includegraphics[scale=0.9]{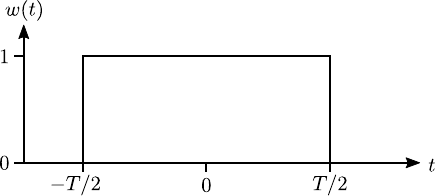}
\caption{Rectangular window applied to the FMCW signal. \label{fig:rect_window}}
\end{figure}
In terms of the rectangular function
\begin{align}
\rect(t) =
    \begin{cases}
        1, & |t| < \frac{1}{2}\\
        0, & \mathrm{otherwise},
    \end{cases} 
\end{align}
the rectangular window $w(t)$ can be written as
\begin{align}
w(t) = \rect\left( \frac{t}{T} \right).
\end{align}
Since the Fourier transform of $\rect t$ is
\begin{align}
\int_{-\infty}^{\infty} \rect t \: e^{-2 \pi j f t} \: dt
    = \frac{\sin \pi f}{\pi f}
    = \sinc f,
\end{align}
the Fourier transform $W(f)$ of the window function is therefore
\begin{align}
W(f) = T \sinc T f.
\label{eqn:ft_window}
\end{align}

Combining equations \ref{eqn:det_conv_thm}, \ref{eqn:ft_det_output}, and \ref{eqn:ft_window}, we find that that the Fourier transform of the windowed detector function $I_w(f)$ is 
\begin{align}
I_w(f) 
    = T \sqrt{i_{LO} \, i_{sig}}
    \Big[
        \sinc T \left( f - f_A + f_B \right)
	+ \sinc T \left( f + f_A - f_B \right)
    \Big].
\end{align}
The two-sided energy spectral density is therefore
\begin{align}
|I_w(f) |^2
    = T^2 \, i_{LO} \, i_{sig}
    \left[
        \sinc^2 T \left( f - f_A + f_B \right)
	+ \sinc^2 T \left( f + f_A - f_B \right)
    \right]
\label{eqn:esd_det_output_window}
\end{align}
where we have ignored any cross-terms by assuming that $|f_A - f_B| \gg 1/T$.

\subsubsection{Shot noise}
Now that we have an understanding of the signal strength in an FMCW LiDAR system, our next step is to determine the noise floor set by the photocurrent shot noise. The shot noise depends upon the \emph{sum} $i_1(t) + i_2(t)$ of the photocurrents generated by the photodiodes, which is approximately equal to the local oscillator photocurrent $i_{LO}$ in a typical LiDAR system where $i_{LO} \gg i_{sig}$. The two-sided power spectral density of the shot noise is therefore
\begin{align}
S(f) = q \, i_{LO},
\label{eqn:psd_shotnoise}
\end{align}
where $q$ is the charge of an electron.

To compute the signal-to-noise ratio, we need to find the expected energy spectral density $\braket{|I_n(f)|^2}$ of the windowed shot-noise. This is given by the convolution of the noise power spectral density $S(f)$ with the energy spectral density $|W(f)|^2$ of the windowing function:
\begin{align}
\braket{|I_n(f)|^2} = S(f) \ast |W(f)|^2.
\label{eqn:esd_shotnoise_0}
\end{align}
For a detailed derivation of this result, please see Appendix \ref{appendix:esd_windowed_random}.

Using our expressions \ref{eqn:ft_window} and \ref{eqn:psd_shotnoise} and for $W(f)$ and $S(f)$ respectively, we find that the energy spectral density is
\begin{align}
\braket{|I_n(f)|^2}
    &= \int_{-\infty}^{\infty} S(f - f') \, |W(f')|^2 \, df' \nonumber \\
    &= \int_{-\infty}^{\infty} q \, i_{LO} \,  T^2 \, \sinc^2 T f' \, df' \nonumber \\
    &= q \, i_{LO} \, T,
\label{eqn:esd_shotnoise_1}
\end{align}
where we have made use of
\begin{align}
\int_{-\infty}^{\infty} \sinc^2 x \, dx = 1.
\end{align}

\subsubsection{Signal-to-noise ratio}
Since the FMCW signal is uncorrelated with the shot noise, the total expected energy spectral density $\braket{|I_t(f)|^2}$ is simply the sum of the FMCW signal $|I_w(f)|^2$ and the shot noise $\braket{|I_n(f)|^2}$,
\begin{align}
\braket{|I_t(f)|^2} = |I_w(f)|^2 + \braket{|I_n(f)|^2},
\end{align}
which we have plotted below in figure \ref{fig:sinc_psd}.

\begin{figure}[h!tb]
\centering
\includegraphics[scale=0.9]{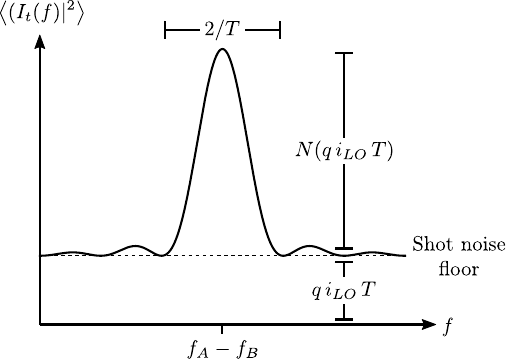}
\caption{Expected energy spectral density for an FMCW LiDAR signal. There is a strong peak at the difference frequency $f_A - f_B$, corresponding to the range to the target. The signal-to-noise ratio is exactly equal to the number of signal photoelectrons $N$. \label{fig:sinc_psd}}
\end{figure}

To find the signal-to-noise ratio, we take the ratio of the FMCW signal and shot noise spectral densities. Using equations \ref{eqn:esd_det_output_window} and \ref{eqn:esd_shotnoise_1}, this yields
\begin{align}
\frac{|I_w(f)|^2}{\braket{|I_n(f)|^2}}
    = N \left[
        \sinc^2 T \left( f - f_A + f_B \right)
	+ \sinc^2 T \left( f + f_A - f_B \right)
    \right],
\end{align}
where $N$ is the number of signal photoelectrons collected during the acquisition time $T$,
\begin{align}
N = \frac{i_{sig} \, T}{q}.
\end{align}
Since $\sinc^2 x \leq 1$, the signal-to-noise ratio $\mathit{SNR}$ in the frequency domain is exactly equal to the number of signal photoelectrons $N$:
\begin{align}
\mathit{SNR} = N.
\end{align}
This is the basis of the common claim that coherent LiDAR systems have ``single-photon sensitivity''. However, a signal-to-noise ratio of one is not nearly high enough for robust detection, and in practice coherent LiDAR systems require several dozen photons for a detection.


\clearpage
\section{Collection efficiency of coherent LiDAR}
Coherent LiDAR systems have a very fundamental and key limitation: \emph{they can only detect light in a single optical mode}. This is immediately obvious if we consider a coherent LiDAR system constructed using single-mode optical fibers: only scattered light that makes its way back into the guided mode of the collecting fiber can be detected. However, this also turns out to be true no matter how the coherent LiDAR is constructed\cite{aesiegman_ao1966}, even if only free-space optical components such as lenses and beamsplitters are used! 

This obviously has severe implications for the collection efficiency of a coherent LiDAR system. In an amplitude modulated LiDAR system that uses direct detection, we can trivially increase the collection efficiency by increasing the size of the detector aperture. However, in a coherent LiDAR system, increasing the size of the detector aperture does not necessarily improve the collection efficiency \cite{jywang_ao1982}.

As we will show later, the average optical power $\braket{P}$ collected by an ideal coherent LiDAR from a diffuse target is
\begin{align}
\braket{P} = \frac{\lambda^2 I}{2 \pi}.
\label{eqn:coherent_col_unif}
\end{align}
Here, $I$ is the intensity of the scattered light at the surface of the target, and $\lambda$ is the wavelength of light. Since the wavelength of light is only $\sim 10^{-6}~\mathrm{m}$, only a tiny fraction of the scattered light is collected by the LiDAR system. A particularly striking feature of equation \ref{eqn:coherent_col_unif} is that the optical power depends \emph{only} upon the scattered light intensity $I$, and therefore the intensity of the illuminating beam.

Many coherent LiDARS employ a \emph{monostatic} configuration, where the transmitted and collected light share the same optical path. In the case of an ideal monostatic LiDAR, the average optical power $\braket{P}$ becomes
\begin{align}
\braket{P} = \frac{\lambda^2 P_{s}}{2 \pi A_\mathit{eff}}
\label{eqn:coherent_col_nonunif}
\end{align}
where $P_{s}$ is the total power of the scattered light, and $A_\mathit{eff}$ is the effective area of the beam. The signal strength is inversely proportional to the beam area, indicating that tightly focusing the transmit and collection beams is critical to achieving high performance in a coherent LiDAR system.

The effective beam area of several common beam shapes is listed below.
\begin{center}
\begin{tabular}{ c c c }
  \toprule
  Type & Intensity $I(x, y)$ & Effective area ($A_\mathit{eff}$) \\
  \midrule
  Flat-top
    & $\begin{cases}
        1, & x^2 + y^2 < r^2 \\
        0, & \mathrm{otherwise.}
      \end{cases}$
    & $\pi r^2$ \vspace{0.2cm} \\ 

   Gaussian 
    & $\exp\left( - \dfrac{2 (x^2 + y^2)}{w^2} \right)$
    & $\pi w^2$ \vspace{0.2cm} \\
  \bottomrule
\end{tabular}
\end{center}


\subsection{Setup}
Since most modern coherent LiDARs are based on fiber-optics or other integrated photonic waveguides, we will study the couping efficiency of scattered light into a single-mode waveguide.

\begin{figure}[h!tb]
\centering
\includegraphics[scale=0.9]{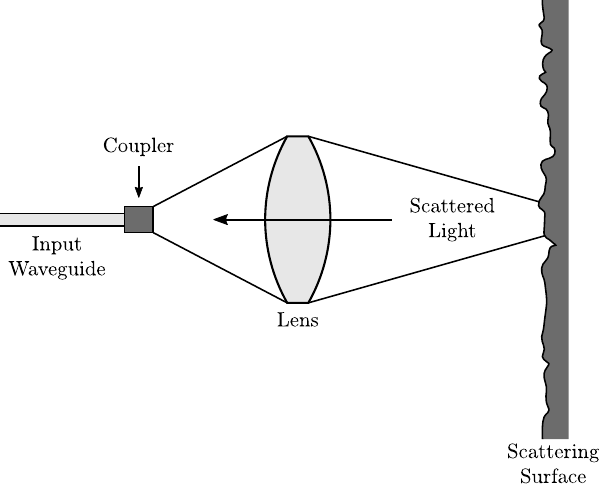}
\caption{Collection optics for a typical coherent LiDAR. Scattered light is focused using a lens, and coupled into a single-mode waveguide via a coupler. This coupler may be a grating coupler in the case of a integrated photonic waveguide, or possibly just a cleaved optical fiber in the case of a fiber-optic based system. \label{fig:waveguide_setup}}
\end{figure}

Our goal is to determine the expected power coupled into the input waveguide, which will ultimately allow for an understanding of the signal strength and thus the useful range of a coherent LIDAR. In particular, we will focus on the case of diffuse scattering surfaces.

\subsection{Power in input waveguide}
Our first step is to express the recieved power in terms of a field overlap integral. Consider a closed surface $S$ that encapsulates the lens and waveguide coupler:

\begin{figure}[h!tb]
\centering
\includegraphics[scale=0.9]{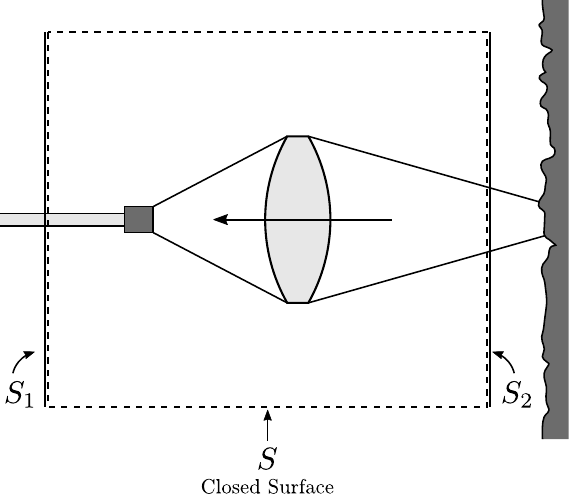}
\caption{To compute the power coupled into the waveguide, we will integrate the fields over surface $S$. The surface is chosen such that only the front and back faces $S_1$ and $S_2$ have non-negligible fields.  \label{fig:waveguide_recip}}
\end{figure}

\noindent
The surface $S$ is chosen so that only the front and back faces $S_1$ and $S_2$ have non-negligible fields. 

We will denote the \emph{scattered} electric and magnetic fields as $\mathbf{E}$ and $\mathbf{H}$. We will also make use of the \emph{modal} fields $\mathbf{E}_m$ and $\mathbf{H}_m$, which are the fields produced by injecting the desired waveguide mode \emph{back} into the system.

\begin{figure}[h!tb]
\centering
\includegraphics[scale=0.9]{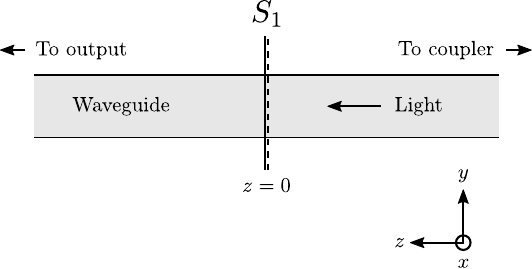}
\caption{The power in the desired waveguide mode can be found using a mode overlap integral. The scattered light propagates through the waveguide in the $+z$ direction, and the surface $S_1$ is located at $z = 0$. \label{fig:waveguide_overlap}}
\end{figure}
To find the power in the waveguide mode, we can perform a mode overlap integral over surface $S_1$, as illustrated in figure \ref{fig:waveguide_overlap}. We choose our local coordinate system such that $S_1$ is located in the $z = 0$ plane, and the scattered light propagates in the $+z$ direction.

The modal fields $\mathbf{E}_m$ and $\mathbf{H}_m$ in the vicinity of surface $S_1$ are simply the fields of the waveguide mode propagating in the $-z$ direction (i.e. \emph{away} from the output). These have the form
\begin{align}
\mathbf{E}_m(x,y,z) &= \mathcal{E}_m(x,y) e^{j \beta z} \\
\mathbf{H}_m(x,y,z) &= \mathcal{H}_m(x,y) e^{j \beta z},s
\end{align}
where $\beta$ is the propagation constant of the waveguide.

We will assume that the waveguide is reciprocal, lossless, and has reflection symmetry across some plane perpendicular to $\hat{z}$. The transverse components of the modal fields $\mathcal{E}_m$ and $\mathcal{H}_m$ can then be taken to be purely real (see Appendix \ref{appendix:wg_modes}). For convenience, we further assume that the modal fields are normalized to unit power:
\begin{align}
1 = \frac{1}{2} \real\left\{ - \iint\displaylimits_{z=0} \mathcal{E}_m \times \mathcal{H}_m^\ast \cdot \hat{z} \, dx \, dy \right\} .
\end{align}

To determine the amplitude $a$ of the scattered light coupled into the waveguide, we will use a mode overlap integral over the $z = 0$ plane. From \cite{prmcisaac_ieeetmtt1991}, the amplitude of the waveguide mode propagating in the $+z$ direction (i.e. \emph{towards} the output) is given by
\begin{align}
a = \frac{1}{N} \iint\displaylimits_{z=0} \left(\mathbf{E}_m \times \mathbf{H} - \mathbf{E} \times \mathbf{H}_m \right) \cdot \hat{z} \, dx \, dy.
\end{align}
The normalization constant $N$ is
\begin{align}
N &= - 2 \iint\displaylimits_{z=0} \left( \mathcal{E}_m \times \mathcal{H}_m \right) \cdot \hat{z} \, dx \, dy.
\end{align}
We can simply our expression for $N$ by making use of the fact that the transverse modal fields are purely real and normalized to unit power, yielding
\begin{align}
N = - 2 \real \left\{ \iint\displaylimits_{z=0} \left( \mathcal{E}_m \times \mathcal{H}_m^\ast \right) \cdot \hat{z} \, dx \, dy \right\} = 4.
\end{align}
Since the $z = 0$ plane coincides with the surface $S_1$, and $\hat{z} \, dx \, dy = d\mathbf{A}$, we can rewrite our expression for the output amplitude $a$ as
\begin{align}
a = \frac{1}{4} \iint\displaylimits_{S_1} \left(\mathbf{E}_m \times \mathbf{H} - \mathbf{E} \times \mathbf{H}_m \right) \cdot d\mathbf{A}.
\label{eqn:wg_overlap}
\end{align}
Finally, since the modal fields are power normalized, the collected power $P$ is given by
\begin{align}
P = |a|^2.
\label{eqn:pwr_ampl}
\end{align}

\subsection{Applying reciprocity}
We now have an expression (\ref{eqn:wg_overlap}) for the output amplitude $a$ in terms of the scattered fields coupled into the waveguide. Unfortunately, this expression is difficult to use directly, since it requires us to model the propagation of scattered light through the lens and the waveguide coupler. We would much prefer to have a field overlap integral over surface $S_2$ immediately adjacent to the scattering surface, where we can neglect everything except the coherence properties of the scattered light. We will achieve this by making use of the reciprocity theorem.

For a closed surface $S$ that encloses no sources, the Lorentz reciprocity theorem \cite{rfharrington_2001_reciprocity} states that
\begin{align}
0 = \oiint\displaylimits_{S} \left( \mathbf{E}_m \times \mathbf{H} - \mathbf{E} \times \mathbf{H}_m \right) \cdot d\mathbf{S}.
\label{eqn:reciprocity}
\end{align}
Earlier, we assumed that the modal fields $\mathbf{E}_m$ and $\mathbf{H}_m$ are zero everywhere on $S$ except for the front and back surfaces $S_1$ and $S_2$. This allows us to rewrite (\ref{eqn:reciprocity}) as
\begin{align}
\iint\displaylimits_{S_1} \left( \mathbf{E}_m \times \mathbf{H} - \mathbf{E} \times \mathbf{H}_m \right) \cdot d\mathbf{S}
= - \iint\displaylimits_{S_2} \left( \mathbf{E}_m \times \mathbf{H} - \mathbf{E} \times \mathbf{H}_m \right) \cdot d\mathbf{S}.
\label{eqn:reciprocity_1}
\end{align}
The term on the left side is exactly the same as the mode overlap integral (\ref{eqn:wg_overlap}) for the output amplitude $a$, except for a factor of $\frac{1}{4}$. The output amplitude is thus given by the overlap integral
\begin{align}
a = \frac{1}{4} \iint\displaylimits_{S_2} \left( \mathbf{E} \times \mathbf{H}_m - \mathbf{E}_m \times \mathbf{H} \right) \cdot d\mathbf{S}.
\label{eqn:s2_overlap}
\end{align}
over surface $S_2$, which is adjacent to the scattering surface.

At this point in our discussion, it is convenient to define the \emph{efficiency} $\eta$ of the optical system as the fraction of modal power reaching the target surface $S_2$: 
\begin{align}
\eta = \frac{1}{2} \real \left\{ \iint\displaylimits_{S_2} \left( \mathbf{E}_m \times \mathbf{H}_m^\ast \right) \cdot d\mathbf{S} \right\} 
\label{eqn:efficiency}
\end{align}
In a properly designed system, $\eta$ will be close to unity. If $\eta < 1$, we will see that this is manifested as a reduction in the collected power.

\subsection{``Flat surface'' approximation}
We can typically choose the integration surface $S_2$ to be flat on wavelength length scales. We will use this fact to considerably simplify our field overlap integral (\ref{eqn:s2_overlap}). Consider a locally flat patch of $S_2$:

\begin{figure}[H]
\centering
\includegraphics[scale=0.9]{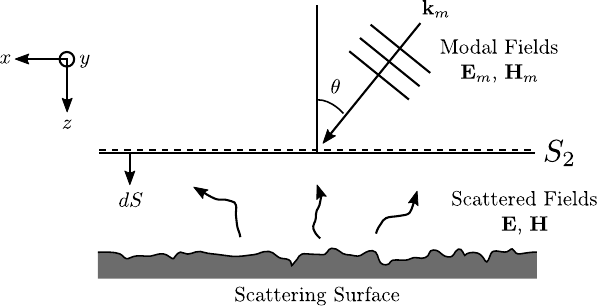}
\caption{We assume that our integration surface $S_2$ is locally flat. In addition, we assume that the modal fields are roughly collimated, with an incidence angle of approximately $\theta$ with respect to the surface $S_2$. \label{fig:scalar_approx}}
\end{figure}

\noindent
We have chosen our local coordinate system such that $z = 0$ on the surface $S_2$, and the surface normal $\hat{z}$ points towards the scattering surface. In addition, we assume that the modal fields $\mathbf{E}_m$, $\mathbf{H}_m$ are roughly collimated in this patch of $S_2$ with an incidence angle of $\theta$ with respect to the surface.

Our goal is to simplify the overlap integral (\ref{eqn:s2_overlap}), which can be written in this locally flat region as
\begin{align}
a = \frac{1}{4} \iint\displaylimits_{z = 0} \left( \mathbf{E} \times \mathbf{H}_m - \mathbf{E}_m \times \mathbf{H} \right) \cdot \hat{z} \, dx \, dy.
\label{eqn:s2_overlap_1}
\end{align}
To do so, we will need to decompose the fields into their constituent plane waves. The scattered fields $\mathbf{E}$ and $\mathbf{H}$ can be written in terms of their plane wave components as
\begin{align}
\mathbf{E}(\mathbf{r}) 
&= \iint\displaylimits_{-\infty}^{\infty} \hat{\mathbf{E}}(k_x, k_y) \; e^{-j \mathbf{k} \cdot \mathbf{r}} \, dk_x \, dk_y 
\label{eqn:E_decomp} \\
\mathbf{H}(\mathbf{r}) 
&= \frac{1}{Z_0} \iint\displaylimits_{-\infty}^{\infty} \hat{k} \times \hat{\mathbf{E}}(k_x, k_y) \; e^{-j \mathbf{k} \cdot \mathbf{r}} \, dk_x \, dk_y.
\label{eqn:H_decomp}
\end{align}
Here,  $\mathbf{r} = (x, y, z)$ is the position, the wavevector $\mathbf{k}$ is given by
\begin{align}
\mathbf{k} = \left(k_x, \, k_y, \, \sqrt{\frac{4 \pi^2}{\lambda^2} - k_x^2 - k_y^2} \right),
\label{eqn:k_scattered}
\end{align}
${\hat{k} = \mathbf{k} / |\mathbf{k}|}$ is the propagation direction, $Z_0$ is the impedance of free space, and $\lambda$ is the wavelength of light. We can see from equation (\ref{eqn:k_scattered}) that the scattered waves only propagate in the $-z$ direction, as desired.

Similarly, the modal fields $\mathbf{E}_m$ and $\mathbf{H}_m$ can be decomposed as 
\begin{align}
\mathbf{E}_m(\mathbf{r}) 
&= \iint\displaylimits_{-\infty}^{\infty} \hat{\mathbf{E}}_m(k_x, k_y) \; e^{-j \mathbf{k}_m \cdot \mathbf{r}} \, dk_x \, dk_y 
\label{eqn:Em_decomp} \\
\mathbf{H}_m(\mathbf{r}) 
&= \frac{1}{Z_0} \iint\displaylimits_{-\infty}^{\infty} \hat{k}_m \times \hat{\mathbf{E}}_m(k_x, k_y) \; e^{-j \mathbf{k}_m \cdot \mathbf{r}} \, dk_x \, dk_y,
\label{eqn:Hm_decomp}
\end{align}
where the wavevector $\mathbf{k}_m$ is given by
\begin{align}
\mathbf{k}_m = \left(k_x, \, k_y, \, \sqrt{\frac{4 \pi^2}{\lambda^2} - k_x^2 - k_y^2} \right).
\label{eqn:k_modal}
\end{align}
Note that the modal fields propagate in the $+z$ direction, in contrast to the scattered fields.

Now, recall that the overlap intgegral of two functions $f(x)$ and $g(x)$ can be written in terms of their Fourier transforms $\hat{f}(k)$ and $\hat{g}(k)$ as
\begin{align}
\int\displaylimits_{-\infty}^{\infty} f(x) \, g(x) \, dx
= \int\displaylimits_{-\infty}^{\infty} \hat{f}(-k) \, \hat{g}(k) \, dk
\label{eqn:plancherel}
\end{align}
due to Plancherel's theorem. Applying (\ref{eqn:plancherel}) and the plane wave decompositions (\ref{eqn:E_decomp}) and (\ref{eqn:Hm_decomp}) to the first term in the overlap integral (\ref{eqn:s2_overlap_1}) yields
\begin{align}
& \frac{1}{4} \iint\displaylimits_{z=0} \left( \mathbf{E} \times \mathbf{H}_m \right) \cdot \hat{z} \, dx \, dy \nonumber \\
&= \frac{1}{4 \, Z_0} \iint\displaylimits_{-\infty}^{\infty}
\left[ \hat{\mathbf{E}}(-k_x, -k_y) \times \left( \hat{k}_m(k_x, k_y) \times \hat{\mathbf{E}}_m(k_x, k_y) \right) \right]
\cdot \hat{z} \, dk_x \, dk_y.
\label{eqn:s2_overlap_term1_1}
\end{align}
If we make use of the vector triple product $\mathbf{a} \times (\mathbf{b} \times \mathbf{c}) = \mathbf{b} (\mathbf{a} \cdot \mathbf{c)} - \mathbf{c} (\mathbf{a} \cdot \mathbf{b})$, our expression becomes
\begin{align}
& \frac{1}{4} \iint\displaylimits_{z=0} \left( \mathbf{E} \times \mathbf{H}_m \right) \cdot \hat{z} \, dx \, dy \nonumber \\
&= \frac{1}{4 \, Z_0} \iint\displaylimits_{-\infty}^{\infty} \left[
\left(\hat{\mathbf{E}}(-k_x, -k_y) \cdot \hat{\mathbf{E}}_m(k_x, k_y) \right) 
\left( \hat{k}_m(k_x, k_y) \cdot \hat{z} \right) \right. \nonumber \\
&\qquad - \left. \left( \hat{\mathbf{E}}(-k_x, -k_y) \cdot \hat{k}_m(k_x, k_y) \right)
\left( \hat{\mathbf{E}}_m(k_x, k_y) \cdot \hat{z} \right) \right]
 \, dk_x \, dk_y. 
\label{eqn:s2_overlap_term1_2}
\end{align}
However, we can see from (\ref{eqn:k_scattered}) and (\ref{eqn:k_modal}) that $\hat{k}_m(k_x, k_y) = - \hat{k}(-k_x, -k_y)$, and the plane wave amplitude $\hat{E}(k_x, k_y)$ is always perpendicular to its propagation direction $\hat{k}(k_x, k_y)$, so the second term in (\ref{eqn:s2_overlap_term1_2}) disappears. We are left with
\begin{align}
& \frac{1}{4} \iint\displaylimits_{z=0} \left( \mathbf{E} \times \mathbf{H}_m \right) \cdot \hat{z} \, dx \, dy \nonumber \\
&= \frac{1}{4 \, Z_0} \iint\displaylimits_{-\infty}^{\infty}
\left(\hat{\mathbf{E}}(-k_x, -k_y) \cdot \hat{\mathbf{E}}_m(k_x, k_y) \right) 
\left( \hat{k}_m(k_x, k_y) \cdot \hat{z} \right)
 \, dk_x \, dk_y. 
\label{eqn:s2_overlap_term1_3}
\end{align}
We now make use of our assumption that the modal beam is roughly collimated with an incidence angle of $\theta$, which allows us to approximate
\begin{align}
\hat{k}_m(k_x, k_y) \cdot \hat{z} \approx \cos\theta.
\end{align}
Our final expression in the spatial frequency domain is then
\begin{align}
& \frac{1}{4} \iint\displaylimits_{z=0} \left( \mathbf{E} \times \mathbf{H}_m \right) \cdot \hat{z} \, dx \, dy \nonumber \\
&= \frac{1}{4 \, Z_0} \iint\displaylimits_{-\infty}^{\infty}
\left(\hat{\mathbf{E}}(-k_x, -k_y) \cdot \hat{\mathbf{E}}_m(k_x, k_y) \right) 
\cos \theta
 \, dk_x \, dk_y. 
\label{eqn:s2_overlap_term1_4}
\end{align}
Transforming back to real space, we obtain
\begin{align}
\frac{1}{4} \iint\displaylimits_{z=0} \left( \mathbf{E} \times \mathbf{H}_m \right) \cdot \hat{z} \, dx \, dy
= \frac{1}{4 \, Z_0} \iint\displaylimits_{-\infty}^{\infty}
\mathbf{E} \cdot \mathbf{E}_m \, \cos \theta \, dx \, dy
\label{eqn:s2_overlap_term1_5}
\end{align}
Similarly, the second term in the overlap integral (\ref{eqn:s2_overlap_1}) becomes
\begin{align}
-\frac{1}{4} \iint\displaylimits_{z=0} \left( \mathbf{E}_m \times \mathbf{H} \right) \cdot \hat{z} \, dx \, dy
= \frac{1}{4 \, Z_0} \iint\displaylimits_{-\infty}^{\infty}
\mathbf{E} \cdot \mathbf{E}_m \, \cos \theta \, dx \, dy
\label{eqn:s2_overlap_term2_1}
\end{align}
The overlap integral (\ref{eqn:s2_overlap_1}) for the output amplitude $a$ is thus approximately
\begin{align}
a = \frac{1}{2\, Z_0} \iint\displaylimits_{S_2} \mathbf{E} \cdot \mathbf{E}_m \, \cos \theta \, dS
\label{eqn:s2_overlap_simple}
\end{align}
This expression will be accurate as long as the integration surface $S_2$ is locally smooth on the wavelength scale, and the modal beam is roughly collimated over regions the size of several wavelengths.

\subsection{Uniform illumination}
The expected value of the recieved power depends upon the statistical properties of the scattered electric field $\mathbf{E}$. To make the problem tractable, we first consider the simple case of a flat and uniform scattering surface that is illuminated uniformly across the entire surface.

More precisely, our assumptions are the following:
\begin{enumerate}
\item $S_2$ is a completely flat surface. As in the previous section, we choose our coordinate system such that $S_2$ coincides with the $z = 0$ plane.

\item The modal field is roughly collimated, i.e. $\cos \theta$ is approximately constant across all of $S_2$.

\item The modal field $\mathbf{E}_m$ has a fixed polarization $\hat{p}$ and has a slowly varying amplitude across $S_2$. The modal electric field on the $z=0$ plane is then of the form
\begin{align}
\mathbf{E}_m(x, y, 0) = \psi(x, y) \, e^{-j (k_x x + k_y y)} \hat{p}
\label{eqn:sva_E}
\end{align}
where the slowly varying scalar amplitude $\psi(x, y)$ is constant over wavelength length scales, and $|\hat{p}|^2 = 1$.

\item The scattered field $\mathbf{E}$ has uniform intensity and coherence width over $S_2$. More precisely, the autocorrelation function $\mathcal{W}(\mathbf{r}_1, \mathbf{r}_2)$ of the electric field $\mathbf{E}(\mathbf{r})$ over $S_2$ only depends on the difference $\mathbf{r}_1 - \mathbf{r}_2$ in the positions. Explicitly,
\begin{align}
\braket{\mathbf{E}(x_1, y_1, 0) \, \mathbf{E}^\dagger(x_2, y_2, 0)} = \mathcal{W}(x_1 - x_2, y_1 - y_2),
\label{eqn:def_autocorr}
\end{align}
where $\braket{\cdot}$ indicates the expected value, and $\dagger$ is the complex transpose. In general, the electric field autocorrelation $\mathcal{W}$ will be a $3 \times 3$ tensor.

\end{enumerate}
Some of these assumptions will be relaxed later. 

Using the assumptions $1 - 3$, our expression (\ref{eqn:s2_overlap_simple}) for the output amplitude $a$ becomes
\begin{align}
a = 
\frac{\cos \theta}{2\, Z_0} \iint\displaylimits_{-\infty}^{\infty}
\left( \mathbf{E}(x, y, 0) \cdot \hat{p} \right)
\psi(x, y) \,  e^{-j (k_x x + k_y y)}  \, dx \, dy.
\label{eqn:s2_overlap_simple_unif}
\end{align}
The expected value of the collected power $P$ is then
\begin{align}
\Braket{P}
&= \Braket{|a|^2} \nonumber \\
&= 
 \frac{\cos^2 \theta}{4\, Z_0^2} \iiiint\displaylimits_{-\infty}^{\infty}
 \bar{p}^\dagger \Braket{\mathbf{E}(x_1 , y_1, 0) \mathbf{E}(x_2 , y_2, 0)^\dagger} \bar{p}
 \  e^{-j (k_x (x_1 - x_2) + k_y (y_1 - y_2))} \nonumber \\
& \qquad\qquad\qquad \times  
 \psi(x_1, y_1) \, \psi^\ast(x_2, y_2)
 \, dx_1 \, dx_2 \, dy_1 \, dy_2 
\label{eqn:P_general_0}
\end{align}
where $\bar{p} = \hat{p}^\ast$. Making use of our definition (\ref{eqn:def_autocorr}) for the electric field autocorrelation $\mathcal{W}$ yields
\begin{align}
\Braket{P}
&= 
 \frac{\cos^2 \theta}{4\, Z_0^2} \iiiint\displaylimits_{-\infty}^{\infty}
 \bar{p}^\dagger \, \mathcal{W}(x_1 - x_2, y_1 - y_2) \, \bar{p}
 \  e^{-j (k_x (x_1 - x_2) + k_y (y_1 - y_2))} \nonumber \\
& \qquad\qquad\qquad \times  
 \psi(x_1, y_1) \, \psi^\ast(x_2, y_2)
 \, dx_1 \, dx_2 \, dy_1 \, dy_2 
\label{eqn:P_general_1}
\end{align}

To further simplify our expression, we define the \emph{scalar autocorrelation function} $\rho(x,y)$ as
\begin{align}
\rho(x, y) = 
 \frac{\cos \theta}{2\, Z_0}
 \; \bar{p}^\dagger \, \mathcal{W}(x, y) \, \bar{p}
 \  e^{-j (k_x x + k_y y)},
\end{align}
and our expression for the collected power becomes
\begin{align}
\Braket{P}
&= 
  \frac{\cos \theta}{2\, Z_0}
  \iiiint\displaylimits_{-\infty}^{\infty}
  \rho(x_1 - x_2, y_1 - y_1)
  \, \psi(x_1, y_1) \, \psi^\ast(x_2, y_2)
 \, dx_1 \, dx_2 \, dy_1 \, dy_2 
\label{eqn:P_general_2}
\end{align}
The scalar autocorrelation function $\rho(x, y)$ vanishes for large $x$ and $y$; this length scale is known as the \emph{coherence width}. Since $\psi(x, y)$ is a slowly varying amplitude, $\rho$ will typically vanish on much shorter length scales than on which $\psi$ varies. This allows us to greatly simplify our expression for $P$. More precisely, if $\rho(x, y) \approx 0$ for any $\sqrt{x^2 + y^2} > L$, and $\psi(x,y)$ is effectively constant on length scale $L$, we can approximate $\rho$ as a delta function:
\begin{align}
\rho(x, y) \approx P_0 \, \delta(x) \, \delta(y)
\label{eqn:rho_approx_unif}
\end{align}
where the characteristic power $P_0$ of a single speckle is
\begin{align}
P_0 
= \iint\displaylimits_{-\infty}^{\infty} \rho(x, y) \, dx \, dy
= \frac{\cos \theta}{2\, Z_0}
 \; \bar{p}^\dagger
 \left[ \;
 \iint\displaylimits_{-\infty}^{\infty}
 \, \mathcal{W}(x, y) \,
 \  e^{-j (k_x x + k_y y)}
 \, dx \, dy
 \right]
 \bar{p}
\label{eqn:def_speckle_power}
\end{align}
However, the term in the brackets is equal to the spatial Fourier transform of the autocorrelation tensor, 
\begin{align}
\hat{\mathcal{W}}(k_x, k_y) 
= \iint\displaylimits_{-\infty}^{\infty}
 \mathcal{W}(x,y) \, e^{-j k_x x - j k_y y} \, dx \, dy,
\label{eqn:autocorr_ft}
\end{align}
which allows us to write the characteristic power $P_0$ more concisely as
\begin{align}
P_0 
= \frac{\cos \theta}{2\, Z_0}
 \; \bar{p}^\dagger
 \, \hat{\mathcal{W}}(k_x, k_y) 
 \, \bar{p}.
\label{eqn:speckle_power_simple}
\end{align}
Under this approximation, (\ref{eqn:P_general_2}) becomes
\begin{align}
\Braket{P}
&= 
  \frac{\cos \theta}{2\, Z_0}
  \iiiint\displaylimits_{-\infty}^{\infty}
  P_0 \, \delta(x_1 - x_2) \, \delta(y_1 - y_2) 
  \, \psi(x_1, y_1) \, \psi^\ast(x_2, y_2)
  \, dx_1 \, dx_2 \, dy_1 \, dy_2 
 \nonumber \\
&= 
  P_0 \  \frac{\cos \theta}{2\, Z_0}
  \iint\displaylimits_{-\infty}^{\infty}
 \, \left| \psi(x, y) \right|^2
 \, dx \, dy.
\label{eqn:P_general_3}
\end{align}

Now, recall that our modal field is power normalized, and the power reaching the target surface $S_2$ is given by the efficiency $\eta$: 
\begin{align}
\eta = \frac{1}{2} \real \left\{ \iint\displaylimits_{S_2} \left( \mathbf{E}_m \times \mathbf{H}_m^\ast \right) \cdot d\mathbf{S} \right\} .
\tag{\ref{eqn:efficiency}}
\end{align}
For our ansatz modal E-field (\ref{eqn:sva_E}), the corresponding H-field is given by
\begin{align}
\mathbf{H}_m(x, y, 0) = \frac{1}{Z_0} \, \psi(x, y) \, e^{-j (k_x x + k_y y)} \hat{k}_m \times \hat{p},
\label{eqn:sva_H}
\end{align}
where $\hat{k}_m = \mathbf{k}_m / |\mathbf{k}_m|$ is the propagation direction, and the wavevector $\mathbf{k}_m$ is given by (\ref{eqn:k_modal}). Substituting our ansatz (\ref{eqn:sva_E}) and (\ref{eqn:sva_H}) into our expression for the efficiency (\ref{eqn:efficiency}) yields
\begin{align}
\eta
&= \frac{1}{2 \, Z_0} \real\left\{ \left( \hat{p} \times \left( \hat{k}_m \times \hat{p}^\ast \right) \right) \cdot \hat{z} \right\}
 \iint\displaylimits_{-\infty}^{\infty} \left| \psi(x, y) \right|^2 \, dx \, dy
 \nonumber \\
&= \frac{1}{2 \, Z_0} \real\left\{ \left| \hat{p} \right|^2 \hat{k}_m \cdot \hat{z}
      - ( \hat{p} \cdot \hat{k}_m ) ( \hat{p} \cdot \hat{z} ) \right\}
 \iint\displaylimits_{-\infty}^{\infty} \left| \psi(x, y) \right|^2 \, dx \, dy
 \nonumber \\
&= \frac{\cos \theta}{2 \, Z_0}
 \iint\displaylimits_{-\infty}^{\infty} \left| \psi(x, y) \right|^2 \, dx \, dy
\label{eqn:efficiency_sva}
\end{align}
where we have made use of our normalization $|\hat{p}|^2 = 1$, the orthogonality of the polarization and propagation direction $\hat{p} \cdot \hat{k}_m = 0$, and our definition of the incidence angle $\hat{k}_m \cdot \hat{z} = \cos \theta$.

We obtain our final result by substituting (\ref{eqn:efficiency_sva}) into (\ref{eqn:P_general_3}), which allows us to express the collected power in terms of the efficiency $\eta$ of the system:
\begin{align}
\braket{P} = \eta \, P_0.
\end{align}
Thus, if we have a lossless system with $\eta = 1$, the expected collected power is exactly equal to the characteristic power $P_0$ of a single speckle in the scattered field.

\subsection{Non-uniform illumination}
In the previous section, we assumed that the scattered light is uniform in intensity across the entire scattering surface. This assumption is not true for most coherent LiDAR systems, which typically scan the scene with tightly focused laser beams.

We can model the case of non-uniform illumination by allowing the speckle power $P_0$ to vary across the scattering surface. We will find it particularly illuminating to write the speckle power in terms of the scattered light intensity $I(x, y)$ and the characteristic speckle area $A_0$,
\begin{align}
P_0(x, y) = A_0 \, I(x, y).
\label{eqn:def_speckle_area}
\end{align}
The characteristic area $A_0$ of a speckle is a property of the scattering surface, which we will assume remains constant over the entire scattering surface. Meanwhile, the scattered light intensity $I(x, y)$ is proportional to the intensity of the illuminating light, and can vary strongly with position. 

Under these assumptions, our expression \ref{eqn:P_general_3} for the expected power $\braket{P}$ becomes 
\begin{align}
\braket{P}
&= A_0 \frac{\cos \theta}{2\, Z_0}
  \iint\displaylimits_{-\infty}^{\infty}
 I(x, y)
 \, \left| \psi(x, y) \right|^2
 \, dx \, dy.
\label{eqn:P_nonunif_0}
\end{align}
The received power is therefore entirely dependent upon the overlap integral between the scattered light intensity $I(x, y)$ and modal field $\left| \psi(x, y) \right|^2$.

For convenience, we define the power normalized modal field $\bar{\psi}(x, y)$ as
\begin{align}
\bar{\psi}(x, y) = \sqrt{\frac{\cos \theta}{2 \, Z_0 \, \eta}}
 \; \psi(x, y),
\end{align}
which satisfies the property
\begin{align}
1 = 
 \iint\displaylimits_{-\infty}^{\infty}
 \left| \bar{\psi}(x, y) \right|^2 
 \, dx \, dy.
\end{align}
Meanwhile, the total power of the scattered light $P_s$ is given by
\begin{align}
 P_s = \iint\displaylimits_{-\infty}^{\infty}
 I(x, y)
 \, dx \, dy,
\end{align}
which can be used to define a normalized scattered light intensity $\bar{I}(x, y)$,
\begin{align}
\bar{I}(x, y) = \frac{I(x, y)}{P_s}.
\label{eqn:def_intensity_norm}
\end{align}
Rewriting \ref{eqn:P_nonunif_0} in terms of the normalized quantities $\bar{\psi}(x, y)$ and $\bar{I}(x, y)$ yields our final expression for the expected power,
\begin{align}
\braket{P}
&= \eta \, P_s \, A_0 
  \iint\displaylimits_{-\infty}^{\infty}
 I(x, y)
 \, \left| \bar{\psi}(x, y) \right|^2
 \, dx \, dy.
 \label{eqn:P_nonunif_1}
\end{align}

\subsubsection{Monostatic LiDAR}
Many coherent LiDAR designs are \emph{monostatic}, sharing the transmit and receive paths. For example, a monostatic LiDAR may use the same collimator and optical fiber to both transmit and receive light. In the case of a monostatic LiDAR, the intensity distribution of the scattered light $\bar{I}(x, y)$ and the collection mode $|\bar{\psi}(x, y)|^2$ will be identical, i.e. 
\begin{align}
\bar{I}(x, y) = \left| \bar{\psi}(x, y) \right|^2.
\end{align}
If we define the effective area $A_\mathit{eff}$ of the laser beam as 
\begin{align}
\frac{1}{A_\mathit{eff}} 
 = \iint\displaylimits_{-\infty}^{\infty}
 \bar{I}^2(x, y),
 \, dx \, dy,
 \label{eqn:def_eff_area}
\end{align}
we obtain a very simple and intuitive expression for the expected power,
\begin{align}
\braket{P}  = \eta \, P_s \, \frac{A_0}{A_\mathit{eff}}.
 \label{eqn:P_monostatic}
\end{align}
The collected power $\braket{P}$ can thus be understood as the fraction of scattered power $P_s$ that falls within a characteristic speckle area $A_0$, multipled by the efficiency $\eta$ of the collection optics.

We will now compute the effective beam area for several commonly encountered beam profiles. 

\paragraph*{Flat-top beam} \mbox{} \newline
A flat-top beam has an intensity profile of the form
\begin{align}
\bar{I}(x,y) 
    = \frac{1}{A_\mathit{beam}}
    \begin{cases}
        1, & (x, y) \in \Omega \\
        0, & \mathrm{otherwise,}
    \end{cases}
\end{align}
where $A_\mathit{beam}$ is the physical area of the beam. From equation \ref{eqn:def_eff_area}, we can see that the effective beam area $A_\mathit{eff}$ is exactly equal to the physical area of the beam, since
\begin{align}
\frac{1}{A_\mathit{eff}} 
 &= \iint\displaylimits_{-\infty}^{\infty}
     \bar{I}^2(x, y)
     \, dx \, dy
     \nonumber \\
 &= \iint\displaylimits_{\Omega}
     \frac{1}{A_\mathit{beam}^2}
     \, dx \, dy
     \nonumber \\
 &= \frac{1}{A_\mathit{beam}}.
\end{align}

\paragraph*{Gaussian beam} \mbox{} \newline
A Gaussian beam with beam radius $w$ has an intensity profile of the form
\begin{align}
\bar{I}(x,y) 
    = N \exp\left( - \frac{2 (x^2 + y^2)}{w^2} \right).
\end{align}
To find the normalization $N$, we take advantage of the fact that the integral of $\bar{I}(x,y)$ should be unity:
\begin{align}
1 &= \iint\displaylimits_{-\infty}^{\infty}
     \bar{I}(x, y)
     \, dx \, dy
     \nonumber \\
  &= \iint\displaylimits_{-\infty}^{\infty}
     N \exp\left( - \frac{2 (x^2 + y^2)}{w^2} \right)
     \, dx \, dy
     \nonumber \\
  &= \frac{N \, \pi \, w^2}{2}.
\end{align}
Thus, the normalization constant $N = 2 / \pi w^2$. Finding the effective area using equation \ref{eqn:def_eff_area} then yields
\begin{align}
\frac{1}{A_\mathit{eff}}
   &= \iint\displaylimits_{-\infty}^{\infty}
     \bar{I}^2(x, y)
     \, dx \, dy
     \nonumber \\
  &= \frac{4}{\pi^2 w^4} \iint\displaylimits_{-\infty}^{\infty}
     \exp\left( - \frac{4 (x^2 + y^2)}{w^2} \right).
     \, dx \, dy
     \nonumber \\
  &= \frac{1}{\pi w^2}.
\end{align}
We can conclude that the effective area of a Gaussian beam is simply
\begin{align}
A_\mathit{eff} = \pi w^2.
\end{align}

\subsection{Ideal Lambertian scattering surface}
We will now consider the case of Lambertian scattering, which is an idealized model of diffuse scattering. An Lambertian scatterer takes all incident light and scatters it in a cosine distribution, as illustrated in figure \ref{fig:lambertian_scattering}.
\begin{figure}[h!tb]
\centering
\includegraphics[scale=0.9]{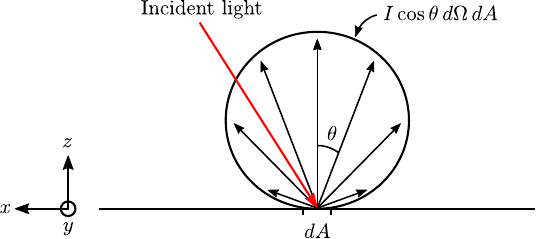}
\caption{In Lambertian scattering, the incident light is scattered into a cosine distribution. The percieved brightness of a Lambertian scatterer is \emph{independent} of the viewing angle, and only depends upon the incident flux per unit area of the surface. \label{fig:lambertian_scattering}}
\end{figure}
The angular distribution of the scattered light is independent of the incidence angle of the incident light.

Rather conveniently for us, an ideal blackbody emitter is a Lambertian emitter. In other words, the light scattered by a Lambertian scatterer is indistinguishable from blackbody radiation, provided that the scattered light is completely depolarized. We can thus directly apply previous results for the spatial coherence of blackbody radiation to Lambertian scattering.

From \cite{kblomstedt_pra2013}, the electric field at the surface of a perfect blackbody is given by the superposition of planewaves
\begin{align}
\mathbf{E}(\mathbf{r}) = \int\displaylimits_\alpha \mathbf{A}(\hat{u}) \, \exp(-j k_0 \hat{u} \cdot \mathbf{r}) \, d\Omega
\label{eqn:blackbody_E}
\end{align}
where
\begin{align}
\hat{u} = \sin\theta \cos\phi \, \hat{x} + \sin\theta \sin\phi \, \hat{y} + \cos\theta \, \hat{z}
\label{eqn:def_u_hat}
\end{align}
is the unit vector that specifies the propagation direction of a plane wave in spherical coordinates $(\phi, \theta)$. The differential solid angle is given by $d\Omega = \sin\theta \, d\theta \, d\phi$, and $\alpha$ is the $2 \pi$ solid angle corresponding to the upper hemisphere. The wavenumber $k_0$ in the medium is related to the wavelength $\lambda$ by $k_0 = 2 \pi / \lambda$.

The vector amplitude $\mathbf{A}(\hat{u})$ of an individual planewave can be decomposed in terms of the s and p polarizations as
\begin{align}
\mathbf{A}(\hat{u}) = A_s(\hat{u}) \hat{s} + A_p(\hat{u}) \hat{p}
\label{eqn:s_p_decomp}
\end{align}
where the polarization vectors $\hat{s}$ and $\hat{p}$ are given by
\begin{align}
\hat{s} &= \frac{\hat{z} \times \hat{u}}{|\hat{z} \times \hat{u}|}
 \label{eqn:def_s_hat} \\
\hat{p} &= \hat{s} \times \hat{u}.
 \label{eqn:def_p_hat}
\end{align}
The correlations between the amplitudes are
\begin{align}
\braket{A_s^\ast (\hat{u}_1) \, A_s (\hat{u}_2 )}
 &= \braket{A_p^\ast (\hat{u}_1) \, A_p (\hat{u}_2 )}
 = a_0 \delta(\hat{u}_1 - \hat{u}_2))
 \label{eqn:selfcorr_s_p}
 \\
\braket{A_s^\ast (\hat{u}_1) \, A_p (\hat{u}_2 )}
 &= \braket{A_p^\ast (\hat{u}_1) \, A_s (\hat{u}_2 )}
 = 0,
 \label{eqn:crosscorr_s_p}
\end{align}
where $a_0$ is real and positive, and the directional Dirac delta function is defined as
\begin{align}
\delta(\hat{u}_1 - \hat{u}_2)
= \frac{\delta(\phi_2 - \phi_1) \, \delta(\theta_2 - \theta_1) }{\sin \theta_2}.
\label{eqn:def_direc_diracdelta}
\end{align}

\subsubsection{Fourier transform of electric field autocorrelation}
Fundamentally, we are interested in finding the characteristic power $P_0$ of a speckle. We can see from equation (\ref{eqn:speckle_power_simple}) that this requires us to take the spatial Fourier transform of the electric field autocorrelation tensor $\mathcal{W}$, defined as
\begin{align}
\hat{\mathcal{W}}(k_x, k_y) 
= \iint\displaylimits_{-\infty}^{\infty}
 \mathcal{W}(x,y) \, e^{-j k_x x - j k_y y} \, dx \, dy
 \tag{\ref{eqn:autocorr_ft}}
\end{align}
Meanwhile, from \cite{kblomstedt_pra2013}, the electric field autocorrelation tensor at the surface of a perfect blackbody is 
\begin{align}
\mathcal{W}(x,y) 
= a_0 \int\displaylimits_\alpha 
  \left( \mathcal{U}_3 - \hat{u} \hat{u}^T \right) 
  e^{jk_x x + jk_y y} \, d\Omega.
 \label{eqn:blackbody_corr}
\end{align}
where $\mathcal{U}_3$ is the $3 \times 3$ identity matrix. Our goal, then, is to change the integration variables in (\ref{eqn:blackbody_corr}) to $k_x$ and $k_y$ so that it can be understood as an inverse spatial Fourier transform. We note that 
\begin{align}
k_x &= k_0 \, \sin\theta \, \cos\phi \\
k_y &= k_0 \, \sin\theta \, \sin\phi.
\end{align}
The Jacobian $J$ of the transformation $(\theta, \phi) \rightarrow (k_x, k_y)$ is then given by
\begin{align}
J^{-1} = 
\begin{pmatrix}
 \dfrac{\partial k_x}{\partial \theta} &  \dfrac{\partial k_x}{\partial \phi} \\[1em]
 \dfrac{\partial k_y}{\partial \theta} &  \dfrac{\partial k_y}{\partial \phi}
\end{pmatrix}
=
k_0 
\begin{pmatrix}
\cos\theta \cos\phi & -\sin\theta \sin\phi \\
\cos\theta \sin\phi & \sin\theta \cos\phi
\end{pmatrix},
\end{align}
and the absolute value of the determinant is
\begin{align}
\left| J \right|^{-1}
= k_0^2  \left(\cos\theta \sin\theta \cos^2\phi + \cos\theta \sin\theta \sin^2\phi \right)
= k_0^2 \cos\theta \sin\theta.
\end{align}
Applying this change of integration variables to (\ref{eqn:blackbody_corr}) yields
\begin{align}
\mathcal{W}(x,y) 
&= a_0 \iint\displaylimits_{k_x^2 + k_y^2 < k_0^2}
  \sin\theta \left| J \right|
  \left( \mathcal{U}_3 - \hat{u} \hat{u}^T \right) 
  e^{jk_x x + jk_y y} \,
  dk_x \, dk_y 
 \nonumber\\
&= \frac{a_0}{k_0^2} \iint\displaylimits_{k_x^2 + k_y^2 < k_0^2}
  \frac{1}{\cos\theta}
  \left( \mathcal{U}_3 - \hat{u} \hat{u}^T \right) 
  e^{jk_x x + jk_y y} \,
  dk_x \, dk_y .
 \label{eqn:blackbody_corr_1}
\end{align}
Applying the Fourier inversion theorem, we conclude that the spatial Fourier transform $\hat{\mathcal{W}}(k_x, k_y)$ of the autocorrelation tensor is
\begin{align}
\hat{\mathcal{W}}(k_x, k_y)
&= 
\begin{cases}
4 \pi^2 \, \dfrac{a_0}{k_0^2} \dfrac{1}{\cos\theta} \left( \mathcal{U}_3 - \hat{u}\hat{u}^T \right), & k_x^2 + k_y^2 \leq k_0^2 \\
0, & \text{otherwise}. 
\end{cases}
\label{eqn:blackbody_autocorr_ft}
\end{align}

\subsubsection{Intensity of blackbody radiation}
Our next step is to relate the normalization constant $a_0$ to the intensity of the blackbody radiation. We define the intensity $I$ as the expected Poynting flux per unit area of the emitting surface:
\begin{align}
I = \left\langle
        \frac{1}{2} \real\left\{
            \mathbf{E}(\mathbf{r}) \times \mathbf{H}^\ast(\mathbf{r})
        \right\}
        \cdot \mathbf{z}
    \right\rangle.
 \label{eqn:def_intensity}
\end{align}
To compute the intensity, we first need to find the magnetic field $\mathbf{H}$ of the blackbody radiation, which is given by
\begin{align}
\mathbf{H}(\mathbf{r})
&= \frac{1}{Z_0} \int\displaylimits_\alpha 
 \left( 
 A_s(\hat{u}) \, \hat{u}\times\hat{s}
 + A_p(\hat{u}) \, \hat{u}\times\hat{p}
 \right)
 \exp(-j k_0 \hat{u} \cdot \mathbf{r}) \, d\Omega 
 \nonumber \\
&= \frac{1}{Z_0} \int\displaylimits_\alpha 
 \left(
 - A_s(\hat{u}) \, \hat{p}
 + A_p(\hat{u}) \, \hat{s}
 \right)
 \exp(-j k_0 \hat{u} \cdot \mathbf{r}) \, d\Omega .
\label{eqn:blackbody_H}
\end{align}
Combining (\ref{eqn:blackbody_E}) and (\ref{eqn:def_intensity}) - (\ref{eqn:blackbody_H}) yields
\begin{align}
I &=  \frac{1}{2 Z_0}
 \real\left\{
  \int\displaylimits_\alpha d\Omega_1   \int\displaylimits_\alpha d\Omega_2 
   \, \exp\left(-j k_0 (\hat{u}_1 - \hat{u}_2) \right)
 \right. \nonumber \\
& \qquad\qquad\qquad
 \left.
  \vphantom{\iint\displaylimits_\alpha}
  \hat{z} \cdot
  \Braket{
    \left(
     A_s(\hat{u}_1) \hat{s}_1 + 
     A_p(\hat{u}_1) \hat{p}_1
    \right)
    \times
    \left(
     - A_s^\ast(\hat{u_2}) \, \hat{p}_2
     + A_p^\ast(\hat{u_2}) \, \hat{s}_2
    \right)
  }
 \right\}.
\end{align}
Simplifying using the amplitude correlations (\ref{eqn:selfcorr_s_p}) and (\ref{eqn:crosscorr_s_p}), we obtain
\begin{align}
I &=  \frac{1}{2 Z_0}
 \real\left\{
  \int\displaylimits_\alpha d\Omega_1   \int\displaylimits_\alpha d\Omega_2 
   \, \exp\left(-j k_0 (\hat{u}_1 - \hat{u}_2) \right)
 \right. \nonumber \\
& \qquad\qquad\qquad
 \left.
  \vphantom{\int\displaylimits_\alpha}
  \hat{z} \cdot
  \Braket{
    A_s(\hat{u}_1) A_s^\ast(\hat{u}_2) \hat{s}_1 \times \hat{p}_2
    +
    A_p(\hat{u}_1) A_p^\ast(\hat{u}_2) \hat{s}_2 \times \hat{p}_1
  }
 \right\}
\nonumber \\
&=  \frac{1}{2 Z_0}
 \real\left\{
  \int\displaylimits_\alpha d\Omega_1   \int\displaylimits_\alpha d\Omega_2 
   \, \exp\left(-j k_0 (\hat{u}_1 - \hat{u}_2) \right)
 \right. \nonumber \\
& \qquad\qquad\qquad
 \left.
  \vphantom{\int\displaylimits_\alpha}
  a_0 \delta(\hat{u}_1 - \hat{u}_2)
  \, \hat{z} \cdot \left( \hat{s}_1 \times \hat{p}_2 +  \hat{s}_2 \times \hat{p}_1 \right)
 \right\}
\nonumber \\
&=  \frac{a_0}{Z_0}
  \int\displaylimits_\alpha d\Omega_1
  \, \real\left\{ \hat{z} \cdot \left( \hat{s}_1 \times \hat{p}_1 \right) \right\}.
\end{align}
However, since $(\hat{p} \times \hat{s}) \cdot \hat{z} = \hat{u} \cdot \hat{z} = \cos\theta$, 
\begin{align}
I =  \frac{a_0}{Z_0}
  \int\displaylimits_\alpha 
  \cos\theta_1 \, d\Omega_1.
\end{align}
This integral over the $2\pi$ solid angle is
\begin{align}
\int\displaylimits_\alpha \cos\theta \, d\Omega 
= \int\displaylimits_{0}^{2\pi} d\phi \int\displaylimits_{0}^{\pi/2} d\theta \, \sin\theta \, \cos\theta
= \pi,
\end{align}
yielding our final expression for the intensity
\begin{align}
I = \frac{\pi a_0 }{Z_0}.
\label{eqn:blackbody_I}
\end{align}

\subsubsection{Speckle power and area}
We are now in a position to compute the characteristic power $P_0$ of a single speckle, which we defined as
\begin{align}
P_0 
= \frac{\cos \theta}{2\, Z_0}
 \; \bar{p}^\dagger
 \, \hat{\mathcal{W}}(k_x, k_y) 
 \, \bar{p}.
\tag{\ref{eqn:speckle_power_simple}}
\end{align}
Substituting our result for the spatial Fourier transform of the autocorrelation tensor $\hat{\mathcal{W}}(k_x, k_y)$ for blackbody radiation (\ref{eqn:blackbody_autocorr_ft}) yields
\begin{align}
P_0 
&= \frac{\cos \theta}{2\, Z_0} 4 \pi^2 \, \frac{a_0}{k_0^2} \frac{1}{\cos\theta}
 \; \bar{p}^\dagger \left( \mathcal{U}_3 - \hat{u}\hat{u}^T \right) \bar{p}
 \nonumber \\
&= \frac{2 \pi^2 a_0}{k_0^2 Z_0} 
 \; \bar{p}^\dagger \left( \mathcal{U}_3 - \hat{u}\hat{u}^T \right) \bar{p}.
\end{align}
We can eliminate the normalization constant $a_0$ by making use of (\ref{eqn:blackbody_I}). In addition, we note that
\begin{align}
\bar{p}^\dagger \hat{u} = \hat{p} \cdot \hat{u} = 0 \\
\bar{p}^\dagger \, \mathcal{U}_3 \, \bar{p} = \bar{p}^\dagger \bar{p} = 1.
\end{align}
This gives us our final expression for the speckle power of a Lambertian scatterer:
\begin{align}
P_0 = \frac{2 \pi I}{k_0^2} = \frac{\lambda^2 I}{2 \pi}.
\end{align}
Here, $\lambda$ is the wavelength of light, and the intensity $I$ can be understood as the power emitted by the scattering surface per unit area of the surface. Meanwhile, the characteristic area $A_0$ of a speckle, which we defined in equation \ref{eqn:def_speckle_area} as the speckle power $P_0$ divided by the intensity $I$, is given by
\begin{align}
A_0 = \frac{2 \pi}{k_0^2} =  \frac{\lambda^2}{2 \pi}.
\end{align}

\bibliographystyle{ieeetr}
\bibliography{lidar_ref}

\begin{thebibliography}{10}

\bibitem{bbehroozpour_ieeecm2017}
B.~{Behroozpour}, P.~A.~M. {Sandborn}, M.~C. {Wu}, and B.~E. {Boser}, ``Lidar
  system architectures and circuits,'' {\em IEEE Communications Magazine},
  vol.~55, pp.~135--142, Oct 2017.

\bibitem{hdgriffiths_ecej1990}
H.~D. {Griffiths}, ``New ideas in {FM} radar,'' {\em Electronics Communication
  Engineering Journal}, vol.~2, pp.~185--194, Oct 1990.

\bibitem{cvpoulton_ol2017}
C.~V. Poulton, A.~Yaacobi, D.~B. Cole, M.~J. Byrd, M.~Raval, D.~Vermeulen, and
  M.~R. Watts, ``Coherent solid-state {LIDAR} with silicon photonic optical
  phased arrays,'' {\em Opt. Lett.}, vol.~42, pp.~4091--4094, Oct 2017.

\bibitem{samiller_cleo2018}
S.~A. Miller, C.~T. Phare, Y.-C. Chang, X.~Ji, O.~A.~J. Gordillo, A.~Mohanty,
  S.~P. Roberts, M.~C. Shin, B.~Stern, M.~Zadka, and M.~Lipson, ``512-element
  actively steered silicon phased array for low-power {LIDAR},'' in {\em
  Conference on Lasers and Electro-Optics}, p.~JTh5C.2, Optical Society of
  America, 2018.

\bibitem{cvpoulton_ieeeqe2019}
C.~V. {Poulton}, M.~J. {Byrd}, P.~{Russo}, E.~{Timurdogan}, M.~{Khandaker},
  D.~{Vermeulen}, and M.~R. {Watts}, ``Long-range {LiDAR} and free-space data
  communication with high-performance optical phased arrays,'' {\em IEEE
  Journal of Selected Topics in Quantum Electronics}, vol.~25, pp.~1--8, Sep.
  2019.

\bibitem{jriemensberger_nature2020}
J.~Riemensberger, A.~Lukashchuk, M.~Karpov, W.~Weng, E.~Lucas, J.~Liu, and
  T.~J. Kippenberg, ``Massively parallel coherent laser ranging using a soliton
  microcomb,'' {\em Nature}, vol.~581, pp.~164--170, May 2020.

\bibitem{blackmore_patent2019_phase}
tephen C.~Crouch and K.~Rupavatharam, ``Method and system for time separated
  quadrature detection of doppler effects in optical range measurements,''
  2019.

\bibitem{mjcollett_jmo1987}
M.~Collett, R.~Loudon, and C.~Gardiner, ``Quantum theory of optical homodyne
  and heterodyne detection,'' {\em Journal of Modern Optics}, vol.~34, no.~6-7,
  pp.~881--902, 1987.

\bibitem{marubin_ol2007}
M.~A. Rubin and S.~Kaushik, ``Squeezing the local oscillator does not improve
  signal-to-noise ratio in heterodyne laser radar,'' {\em Opt. Lett.}, vol.~32,
  pp.~1369--1371, Jun 2007.

\bibitem{kmnatarajan_sst2012}
C.~M. Natarajan, M.~G. Tanner, and R.~H. Hadfield, ``Superconducting nanowire
  single-photon detectors: physics and applications,'' {\em Superconductor
  Science and Technology}, vol.~25, p.~063001, apr 2012.

\bibitem{aesiegman_ao1966}
A.~E. Siegman, ``The antenna properties of optical heterodyne receivers,'' {\em
  Appl. Opt.}, vol.~5, pp.~1588--1594, Oct 1966.

\bibitem{jywang_ao1982}
J.~Y. Wang, ``Heterodyne laser radar-{SNR} from a diffuse target containing
  multiple glints,'' {\em Appl. Opt.}, vol.~21, pp.~464--476, Feb 1982.

\bibitem{prmcisaac_ieeetmtt1991}
P.~R. McIsaac, ``Mode orthogonality in reciprocal and nonreciprocal
  waveguides,'' {\em IEEE Transactions on Microwave Theory and Techniques},
  vol.~39, pp.~1808--1816, Nov 1991.

\bibitem{rfharrington_2001_reciprocity}
R.~F. Harrington, {\em Time-Harmonic Electromagnetic Fields}, pp.~116--120.
\newblock IEEE Press, 2001.

\bibitem{kblomstedt_pra2013}
K.~Blomstedt, T.~Set\"al\"a, J.~Tervo, J.~Turunen, and A.~T. Friberg, ``Partial
  polarization and electromagnetic spatial coherence of blackbody radiation
  emanating from an aperture,'' {\em Physical Review A}, vol.~88, p.~013824,
  Jul 2013.

\end{thebibliography}

\begin{appendices}

\clearpage
\section{Spectral density of windowed \\random signals}
\label{appendix:esd_windowed_random}
Suppose we have a stationary random signal $x(t)$. The autocorrelation $\hat{x}(\tau)$ is defined as
\begin{align}
\hat{x}(\tau) = \braket{x(t) x^\ast(t - \tau)},
\end{align}
and the power spectral density $S(f)$ is
\begin{align}
S(f) = \lim_{T \rightarrow \infty}
  \left<\frac{1}{T} \left| \int_0^T x(t) e^{-2 \pi j f t} dt \right|^2 \right>.
\end{align}
Here, $\braket{\cdot}$ denotes the expected value. From the Wiener-Khinchin theorem, the autocorrelation $\hat{x}(\tau)$ and power-spectral density $S(f)$ are Fourier transform pairs, related by
\begin{align}
S(f) = \int_{-\infty}^{\infty} \hat{x}(\tau) e^{-2 \pi j f \tau} d\tau.
\end{align}

Now, suppose we wish to take the Fourier transform of the random signal $x(t)$ when it is windowed by some window function $w(t)$. We define the windowed random signal $y(t)$ as
\begin{align}
y(t) = w(t) \, x(t)
\label{eqn:window_rand}
\end{align}
and denote its Fourier transform as $Y(f)$. For simplicity, we will assume that the random process has zero mean, i.e. $\braket{x(t)} = 0$. As a result, the expected value of $Y(f)$ will also be zero, since the Fourier transform and expection value operators commute with each other:
\begin{align}
\braket{Y(f)} 
  &= \left< 
       \int_{-\infty}^{\infty} y(t) e^{-2 \pi j f t} dt
    \right> \nonumber\\
  &= \int_{-\infty}^{\infty} w(t) \braket{x(t)} e^{-2 \pi j f t} dt \\
  &= 0.
\end{align}

The expected energy spectral density $\braket{|Y(f)|^2}$, however, does not have such a trivial result. By definition,
\begin{align}
\braket{|Y(f)|^2}
 &= \left< \left|
        \int_{-\infty}^{\infty} y(t) e^{-2 \pi j f t} dt
    \right|^2 \right> \nonumber \\
 &= \int_{-\infty}^{\infty} dt \int_{-\infty}^{\infty} dt'
        \left< y(t) y^\ast(t') \right> e^{-2 \pi j f (t - t')}.
\end{align}
Changing variables $\tau = t - t'$ in the inner integral yields
\begin{align}
\braket{|Y(f)|^2}
 &= \int_{-\infty}^{\infty} dt \int_{-\infty}^{\infty} d\tau
        \left< y(t) \, y^\ast(t - \tau) \right> e^{-2 \pi j f \tau}.
\label{eqn:window_rand_psd_0}
\end{align}
However, from our definition of $y(t)$ in equation \ref{eqn:window_rand}, we can expand $\left< y(t) \, y^\ast(t - \tau) \right>$ as
\begin{align}
\left< y(t)  y^\ast(t - \tau) \right> 
 &= w(t) \, w^\ast(t - \tau) \left< x(t)\, x^\ast(t - \tau) \right> \nonumber\\
 &= w(t) \, w^\ast(t - \tau) \, \hat{x}(\tau).
\end{align}
If we further change the order of integration, equation \ref{eqn:window_rand_psd_0} becomes
\begin{align}
\braket{|Y(f)|^2}
 &= \int_{-\infty}^{\infty} d\tau \, \hat{x}(\tau)  e^{-2 \pi j f \tau}
    \int_{-\infty}^{\infty} dt\, w(t) w^\ast(t - \tau).
\end{align}
The inner integral can be identified as the autocorrelation $\hat{w}(\tau)$ of the window function $w(t)$,
\begin{align}
\hat{w}(\tau) = \int_{-\infty}^{\infty} w(t) w^\ast(t - \tau),
\end{align}
allowing us to simplify our expression to
\begin{align}
\braket{|Y(f)|^2}
 &= \int_{-\infty}^{\infty} d\tau \, \hat{x}(\tau) \hat{w}(\tau) e^{-2 \pi j f \tau}.
\end{align}
Applying the convolution theorem, we find that the expected energy spectral density is
\begin{align}
\braket{|Y(f)|^2} = S(f) \ast \int_{-\infty}^{\infty} \hat{w}(\tau) e^{- 2 \pi j f \tau},
\label{eqn:window_rand_psd_1}
\end{align}
where $\ast$ is the convolution operator
\begin{align}
(f \ast g) (x) = \int_{-\infty}^{\infty} f(y) g(x - y) dy.
\end{align}
We can identify the integral in equation \ref{eqn:window_rand_psd_1} as the energy spectral density $|W(f)|^2$ of the window function $w(t)$, since
\begin{align}
|W(f)|^2 
&= \left| \int_{-\infty}^{\infty} w(t) e^{- 2 \pi j f t} dt \right|^2 \nonumber \\
&= \int_{-\infty}^{\infty} dt \int_{-\infty}^{\infty} dt'
     w(t) \, w(t') e^{-2 \pi j f (t - t')} \nonumber \\
&= \int_{-\infty}^{\infty} d\tau e^{-2 \pi j f \tau} 
     \int_{-\infty}^{\infty} dt \, w(t) \, w(t - \tau) \nonumber \\
&= \int_{-\infty}^{\infty} \hat{w}(\tau) e^{- 2 \pi j f \tau} d\tau.
\end{align}
Thus, the expected energy spectral density $\braket{|Y(f)|^2}$ is simply the power spectral density $S(f)$ of the random signal convolved with the energy spectral density $|W(f)|^2$ of the window function:
\begin{align}
\braket{|Y(f)|^2} = S(f) \ast |W(f)|^2.
\end{align}

\clearpage
\section{Modes of lossless waveguides}
\label{appendix:wg_modes}

\begin{theorem}
Suppose we have a waveguide that is simultaneously:
\begin{enumerate}
\item Reciprocal.
\item Lossless.
\item Has reflection symmetry across some plane perpendicular to $\hat{z}$.
\end{enumerate}
Then the transverse field components of any guided mode of this waveguide will be purely real.
\end{theorem}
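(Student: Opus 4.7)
The plan is to construct two images of a given forward-propagating mode under two symmetries of Maxwell's equations, each of which sends a mode at propagation constant $\beta$ to one at $-\beta$: complex conjugation (made a symmetry by combining reciprocity and losslessness), and reflection across the symmetry plane supplied by the third hypothesis. Identifying the two resulting backward modes up to a multiplicative constant then pins down the phase of the transverse field profile, making it real after a global rescaling.

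\emph{Two backward modes.} Reciprocity forces $\epsilon$ and $\mu$ to be symmetric, losslessness forces them to be Hermitian, and together the two conditions make them real. With the $e^{j\omega t}$ convention used implicitly in the paper, conjugating Maxwell's equations then shows that $(\mathbf{E}^\ast, -\mathbf{H}^\ast)$ solves the same equations whenever $(\mathbf{E}, \mathbf{H})$ does. Applied to the mode ansatz $\mathbf{E}_m = \mathcal{E}_m(x,y)\, e^{j\beta z}$, this produces a backward-propagating solution with $z$-dependence $e^{-j\beta z}$, transverse electric profile $\mathcal{E}_{m,t}^\ast$, and longitudinal component $\mathcal{E}_{m,z}^\ast$. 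Separately, placing the symmetry plane at $z=0$ and sending $z \to -z$ transforms $\mathbf{E}$ as a polar vector, preserving its transverse components and flipping the sign of its longitudinal one. Applied to the same forward mode, this yields a second backward-propagating solution, but now with transverse profile $\mathcal{E}_{m,t}$ and longitudinal component $-\mathcal{E}_{m,z}$.

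\emph{Reality via uniqueness.} If the backward mode at $-\beta$ is non-degenerate, the two constructions must coincide up to a scalar $c$, giving $\mathcal{E}_{m,t}^\ast = c\, \mathcal{E}_{m,t}$ and $\mathcal{E}_{m,z}^\ast = -c\, \mathcal{E}_{m,z}$. Conjugating the first relation and feeding it back in forces $|c|^2 = 1$, so $c = e^{j\alpha}$; rescaling the mode by the global phase $e^{j\alpha/2}$ then produces a representative whose transverse components satisfy $\tilde{\mathcal{E}}_{m,t}^\ast = \tilde{\mathcal{E}}_{m,t}$, i.e.\ are purely real (with the longitudinal component coming out purely imaginary as a byproduct). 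The main obstacle is the non-degeneracy step: for an isolated guided mode it follows from the Lorentz-reciprocity orthogonality relation already invoked in the main text, but if $\beta$ lies in a symmetry-protected degenerate subspace one has to argue instead that the composition ``reflection followed by complex conjugation'' is an anti-linear involution on the finite-dimensional mode space at $\beta$, and then invoke the standard fact that every anti-linear involution admits a real eigenbasis to choose modes with real transverse fields inside the degenerate subspace.
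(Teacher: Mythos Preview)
Your argument is essentially the same as the paper's: construct two backward-propagating images of a given mode via reflection and via conjugation, identify them up to a phase in the non-degenerate case, and absorb that phase to make the transverse profile real. The only substantive difference is in the degenerate case, where the paper argues informally by a limiting perturbation that breaks the degeneracy, whereas you invoke the cleaner algebraic fact that an anti-linear involution on a finite-dimensional space admits a real eigenbasis; both routes close the gap.
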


\begin{proof}
To show this, we will build upon the results given in reference \cite{prmcisaac_ieeetmtt1991}. Consider a guided mode of the waveguide, which is any solution to Maxwell's equations of the form
\begin{align}
\mathbf{E}_m(x,y,z) &= \mathcal{E}_m(x,y) e^{j \beta z} \\
\mathbf{H}_m(x,y,z) &= \mathcal{H}_m(x,y) e^{j \beta z}
\end{align}
with a purely real propagation constant $\beta$. We denote the transverse components of the modal fields (i.e. the field components orthogonal to $z$) as $\mathcal{E}_{m,T}$ and $\mathcal{H}_{m,T}$.

From reference \cite{prmcisaac_ieeetmtt1991}, the following additional guided modes must exist:
\begin{enumerate}
\item Reflection symmetry implies that there also exists a mode with propagation constant $-i\beta$ and transverse modal fields $\mathcal{E}_{m,T}$ and $-\mathcal{H}_{m,T}$.
\item Losslessness implies that there also exists a mode with propagation constant $-i\beta$ and transverse modal fields $\mathcal{E}_{m,T}^\ast$ and $-\mathcal{H}_{m,T}^\ast$.
\end{enumerate}
If our waveguide does not have mode degeneracy (i.e. no two waveguide modes share the same propagation constant), we can conclude that these two backwards propagating modes must be the same since they have the same propagation constant $-j\beta$. This implies that the modal fields must be equal up to the some phase factor $\phi$:
\begin{align}
\mathcal{E}_{m,T} &= e^{j \phi} \mathcal{E}_{m,T}^\ast \\
-\mathcal{H}_{m,T} &= e^{j \phi} \mathcal{H}_{m,T}^\ast.
\end{align}
Rearranging yields
\begin{align}
e^{-j \phi / 2} \mathcal{E}_{m,T} &= \left( e^{-j \phi / 2} \mathcal{E}_{m,T} \right)^\ast \\
e^{-j \phi / 2} \mathcal{H}_{m,T} &= \left( e^{-j \phi / 2} \mathcal{H}_{m,T} \right)^\ast.
\end{align}
Thus, $e^{-j \phi / 2} \mathcal{E}_{m,T}$ and $e^{-j \phi / 2} \mathcal{H}_{m,T}$ are purely real. In other words, if we have a waveguide that is reciprocal, lossless, has reflection symmetry, and lacks mode degeneracy, the transverse modal fields $\mathcal{E}_m$ and $\mathcal{H}_m$ of any propagating modes can be taken to be real.

Waveguides with degenerate modes also satisfy this property, since they can be taken as the limiting case of distorted waveguides that lack mode degeneracy. For example, the fundamental mode of a circular core waveguide is degenerate due to rotational symmetry. However, a circular core waveguide can be understood as the limiting case of a sequence of elliptical core waveguides that gradually become more circular. Throughout this entire sequence, the fundamental mode will remain non-degenerate, and thus its transverse field components will remain purely real.
\end{proof}

\end{appendices}

\end{document}